\newcommand{\NP}{{\sf NP}}
\newtheorem{observation}{Observation}
\begin{document}

\title{Classifying $k$-Edge Colouring for $H$-free Graphs\thanks{The second author was supported by  the Research Council of Norway via the project CLASSIS. The third author was supported by the Leverhulme Trust (RPG-2016-258).}}
\author{
Esther Galby\inst{1}
\and 
Paloma T. Lima\inst{2}
\and
Dani{\"e}l Paulusma\inst{3}
\and
Bernard Ries\inst{1}
}

\institute{
Department of Informatics, University of Fribourg, Switzerland,  \texttt{\{esther.galby,bernard.ries\}@unifr.ch}
\and
Department of Informatics, University of Bergen, Norway, \texttt{paloma.lima@uib.no}
\and
Department of Computer Science, Durham University, UK, \texttt{daniel.paulusma@durham.ac.uk}
}

\maketitle

\begin{abstract}
A graph is $H$-free if it does not contain an induced subgraph isomorphic to~$H$.
For every integer $k$ and  every graph~$H$, we determine the computational complexity of $k$-{\sc Edge Colouring} for $H$-free graphs.
\end{abstract}

\section{Introduction}\label{s-intro}

A graph $G=(V,E)$ is {\it $k$-edge colourable} for some integer~$k$ if there exists a mapping $c:E\to \{1,\ldots,k\}$ such that
$c(e)\neq c(f)$ for any two edges $e$ and $f$ of $G$ that have a common end-vertex. The {\it chromatic index} of $G$ is the smallest integer~$k$ such that $G$ is $k$-edge colourable. Vizing proved the following classical result.

\begin{theorem}[\cite{Vi64}]\label{t-vizing}
The chromatic index of a graph $G$ with maximum degree~$\Delta$ is either $\Delta$ or $\Delta+1$.
\end{theorem}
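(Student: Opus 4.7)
The plan is to prove the upper bound; the lower bound $\Delta$ is immediate, since the $\Delta$ edges incident to a vertex of maximum degree must all receive distinct colours. To establish that $G$ admits a proper $(\Delta+1)$-edge-colouring, I would induct on $|E(G)|$, with the edgeless graph as the trivial base. For the inductive step, pick any edge $uv_0$, apply the hypothesis to $G-uv_0$ to obtain a proper colouring $c: E(G)\setminus\{uv_0\}\to\{1,\ldots,\Delta+1\}$, and argue that $c$ can be extended to $uv_0$.

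Because every vertex has degree at most $\Delta$ while $\Delta+1$ colours are available, at each vertex at least one colour is \emph{missing} (absent from all incident edges). If some colour is missing at both $u$ and $v_0$, assign it to $uv_0$ and stop. Otherwise, I would grow a \emph{fan} at $u$ rooted at $v_0$: a maximal sequence $v_0,v_1,\ldots,v_k$ of distinct neighbours of $u$ such that, for each $i\geq 1$, the colour $c(uv_i)$ is missing at $v_{i-1}$. A simple rotation observation shows that if some colour is missing at both $u$ and any $v_i$, then cyclically shifting the colours $c(uv_1),\ldots,c(uv_i)$ one step backward and placing that colour on $uv_i$ yields the desired extension.

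The heart of the proof is therefore a Kempe-chain swap to manufacture this situation. Pick a colour $\alpha$ missing at $u$ and a colour $\beta$ missing at $v_k$; by the previous paragraph we may assume $\alpha \neq \beta$ and that $\alpha$ is present at each $v_i$. The subgraph spanned by the edges of colours $\alpha$ and $\beta$ has maximum degree $2$, so its components are paths and even cycles. Let $P$ be the component containing $v_k$; a case distinction on whether the other endpoint of $P$ is $u$, one of the $v_i$ with $i<k$, or some other vertex shows that swapping $\alpha$ and $\beta$ along $P$ always produces a partial colouring in which some prefix $v_0,\ldots,v_j$ of the fan has $\alpha$ missing at both $u$ and $v_j$, enabling the rotation above. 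The main obstacle is precisely this case analysis: one must verify that after the swap the fan property is preserved on the relevant prefix and no new conflict is created at $u$, which relies on the maximality of the fan (forcing $\beta$ to coincide with some $c(uv_i)$ when $P$ reaches $u$) and on the fact that the swap affects only the two colours $\alpha$ and $\beta$.
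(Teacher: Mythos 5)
The paper does not prove this statement at all: it is Vizing's theorem, imported verbatim from \cite{Vi64} and used as a black box (only to observe that graphs of maximum degree at most $k-1$ are trivially $k$-edge colourable). So there is no in-paper argument to compare against; what you have written is the classical fan-recolouring proof, essentially Vizing's original argument as it appears in standard textbooks, and its overall structure (induction on edges, common missing colour, fan rotation, maximality forcing $\beta=c(uv_i)$, Kempe swap on the $\alpha$-$\beta$ subgraph) is correct. One detail in your final paragraph is stated too strongly: if the component $P$ of $v_k$ has $u$ as its other endpoint, then $P$ reaches $u$ along the unique $\beta$-coloured edge $uv_i$, and swapping along $P$ turns that edge $\alpha$, so after the swap it is $\beta$ (not $\alpha$) that is missing at $u$; one then completes at the prefix ending in $v_{i-1}$, which misses $\beta$ and lies off $P$. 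Equivalently, a cleaner bookkeeping is to note that the $\alpha$-$\beta$ paths starting at $v_{i-1}$ and at $v_k$ cannot both end at $u$, and to swap whichever one avoids $u$, so that $\alpha$ genuinely remains missing at $u$ and becomes missing at the chosen fan vertex. With that correction the case analysis closes as you describe.
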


The {\sc Edge Colouring} problem is to decide if a given graph $G$ is $k$-edge colourable for some given integer~$k$.
Note that $(G,k)$ is a yes-instance if $G$ has maximum degree at most $k-1$ by Theorem~\ref{t-vizing} and that
$(G,k)$ is a no-instance if $G$ has maximum degree at least $k+1$.
If $k$ is {\it fixed}, that is, $k$ is not part of the input, then we denote the problem by $k$-{\sc Edge Colouring}.
It is trivial to solve this problem for $k=2$.
However, the problem is \NP-complete if $k\geq 3$, as shown by Holyer for $k=3$ and by Leven and Galil for $k\geq 4$.

\begin{theorem}[\cite{Ho81,LG83}]\label{t-hard}
For $k\geq 3$, {\sc $k$-Edge Colouring} is \NP-complete even for $k$-regular graphs.
\end{theorem}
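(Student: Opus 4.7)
The plan is to establish \NP-membership trivially (a proposed colouring is a polynomial-size certificate that can be checked in polynomial time by scanning pairs of adjacent edges) and then separately prove hardness for $k=3$ following Holyer and for $k\geq 4$ following Leven and Galil, in both cases producing an instance that is already $k$-regular.

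For $k=3$, I would reduce from 3-\textsc{SAT}. The construction assembles three kinds of gadgets into a single cubic graph $G_\phi$. A \emph{variable gadget} for each variable $x$ is a cubic subgraph whose 3-edge-colourings restricted to a distinguished set of ``literal edges'' partition into two symmetric classes (colour $1$ vs colour $2$), interpreted as the truth values of $x$ and $\neg x$. A \emph{clause gadget} for each clause of width three is a cubic subgraph with three distinguished input edges that admits a 3-edge-colouring if and only if the three inputs do not all simultaneously carry the designated ``false'' colour. An \emph{inverter/parity} gadget transports the colour of a literal edge to a clause input while preserving the intended parity, and careful bookkeeping ensures every vertex of $G_\phi$ has degree exactly $3$. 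The equivalence ``$\phi$ satisfiable $\iff$ $G_\phi$ is $3$-edge-colourable'' then reduces to verifying the stated local colouring property of each individual gadget.

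For $k\geq 4$ the reduction can be bootstrapped from the $k=3$ case: given a cubic graph $G$ one constructs a $k$-regular graph $G'$ by attaching to each vertex of $G$ a rigid gadget that forces the extra $k-3$ colours $\{4,\dots,k\}$ to appear exactly once at that vertex, so that only the colours $\{1,2,3\}$ are available on the original edges of $G$. Any $k$-edge-colouring of $G'$ therefore restricts to a $3$-edge-colouring of $G$, and conversely any $3$-edge-colouring of $G$ extends; hence $3$-\textsc{Edge Colouring} on cubic graphs reduces to $k$-\textsc{Edge Colouring} on $k$-regular graphs. Alternatively one may repeat the 3-SAT reduction with enlarged variable, clause, and inverter gadgets that behave analogously modulo~$k$, which is the route taken in \cite{LG83}.

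The main obstacle is the design of the gadgets, since the $k$-regularity constraint is highly rigid: every vertex of every gadget must have degree exactly $k$, while the gadgets must simultaneously enforce the intended logical behaviour on their interface edges. Proving both directions of each reduction then amounts to a finite but delicate case analysis enumerating the admissible colour patterns on these interface edges; this case analysis is precisely the technical core of \cite{Ho81} and \cite{LG83}, and I would cite those papers rather than redo it.
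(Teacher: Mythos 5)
The paper does not prove this statement at all: Theorem~\ref{t-hard} is imported verbatim from Holyer~\cite{Ho81} ($k=3$) and Leven and Galil~\cite{LG83} ($k\geq 4$) and is used as a black box, so your ultimate decision to cite those papers for the technical core is exactly what the authors do. Your sketch of the $k=3$ case is faithful in outline (a reduction from 3-\textsc{Sat} assembling variable, clause and inverting components into a single cubic graph), with the minor caveat that Holyer encodes truth values by whether the edges in a distinguished output \emph{pair} receive equal or distinct colours, not by ``colour $1$ versus colour $2$''; since any permutation of the colours of a proper edge colouring is again a proper edge colouring, no gadget can pin down specific colour names, only relations between colours.

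That same permutation-invariance is why your proposed bootstrap for $k\geq 4$ does not work as stated. A rigid gadget attached locally at each vertex $v$ of a cubic graph $G$ can at best force the $k-3$ gadget edges at $v$ to occupy \emph{some} set $S_v$ of $k-3$ colours, leaving the three original edges at $v$ with the complementary three colours; but nothing forces the sets $S_v$ to coincide across different vertices. Consequently a $k$-edge-colouring of $G'$ restricts only to a proper edge colouring of $G$ in which each vertex sees three colours, not to a $3$-edge-colouring of $G$. This is fatal: by Theorem~\ref{t-vizing} every cubic graph is $4$-edge-colourable, so for $k\geq 4$ such a reduction would map every instance to a yes-instance. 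Synchronising the colour sets globally requires interconnecting the gadgets, which is essentially why Leven and Galil redo the whole Holyer-style construction with enlarged components rather than bootstrapping from the cubic case; your ``alternative'' route is in fact the only one of the two that is sound, and the citation to~\cite{LG83} carries the proof.
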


Due to the above hardness results we may wish to restrict the input to some special graph class. 
A natural property of a graph class is to be closed under vertex deletion. Such graph classes are called {\it hereditary} and they form the focus of our paper. To give an example, bipartite graphs form a hereditary graph class, and it is well-known that they have chromatic index $\Delta$. Hence, {\sc Edge Colouring} is polynomial-time solvable for bipartite graphs, which are perfect and triangle-free. In contrast, Cai and Ellis~\cite{CE91} proved that for every $k\geq 3$, $k$-{\sc Edge Colouring} is \NP-complete for $k$-regular comparability graphs, which are also perfect. 
They also proved the following two results, the first one of which shows that {\sc Edge Colouring} is \NP-complete for triangle-free graphs (the graph $C_s$ denotes the cycle on $s$ vertices).

\begin{theorem}[\cite{CE91}]\label{t-ce1}
Let $k\geq 3$ and $s\geq 3$. Then $k$-{\sc Edge Colouring} is \NP-complete for $k$-regular $C_s$-free graphs.
\end{theorem}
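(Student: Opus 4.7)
\medskip
\noindent\emph{Proof plan.}
The plan is to reduce from $k$-\textsc{Edge Colouring} on $k$-regular graphs, which is \NP-complete by Theorem~\ref{t-hard}. From a $k$-regular graph $G$ I would construct, in polynomial time, a $k$-regular $C_s$-free graph $G'$ such that $G$ is $k$-edge colourable if and only if $G'$ is.

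The reduction replaces every edge $uv\in E(G)$ by a fresh copy $\Gamma_{uv}$ of a fixed \emph{edge gadget} $\Gamma=\Gamma(k,s)$. The gadget $\Gamma$ has two designated terminals $x$ and $y$ of degree $k-1$ in $\Gamma$, every other vertex of $\Gamma$ has degree $k$, and the distance from $x$ to $y$ in $\Gamma$ is at least $s$. To obtain $G'$, we delete $uv$ and add the two edges $u x_{uv}$ and $v y_{uv}$, where $x_{uv},y_{uv}$ are the terminals of $\Gamma_{uv}$. I would require two further properties of $\Gamma$: (i)~$\Gamma$ is $C_s$-free, and (ii)~in every $k$-edge colouring of $\Gamma$ augmented with pendant edges $e_x,e_y$ at $x$ and $y$, the colours of $e_x$ and $e_y$ are forced to be equal. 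Given such a $\Gamma$, the graph $G'$ is $k$-regular by construction. To check $C_s$-freeness, any induced cycle of $G'$ either lies entirely inside some $\Gamma_{uv}$ (ruled out by (i)) or visits at least one vertex of $G$, in which case it must enter and leave at least one gadget from terminal to terminal and therefore contain an internal subpath of length at least $s$, making its total length strictly greater than $s$. The equivalence of $k$-edge colourability follows directly from~(ii): a colouring $c$ of $G$ lifts by assigning colour $c(uv)$ to the two external edges of $\Gamma_{uv}$ and extending inside each gadget, and conversely any colouring of $G'$ induces a proper colouring of $G$ by reading off the two terminal colours of each gadget.

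The main obstacle is the construction of the gadget $\Gamma$ itself. Property~(ii) demands that $\Gamma$ transmit a fixed edge colour rigidly between its two terminals, despite them being at distance at least $s$; the standard way to achieve this is to chain Vizing-style alternating-path arguments, in the spirit of the proof of Theorem~\ref{t-vizing}, along a carefully chosen path joining $x$ and $y$ of the correct parity. Every internal vertex of that path must then be padded up to degree $k$ by attaching small $k$-regular $C_s$-free ``filler'' components, and the path together with its fillers must be arranged so that no induced $C_s$ is ever created inside $\Gamma$. The precise choice of path length, filler, and attachment pattern depends on $k$ and the parity of $s$, and working it out is where the bulk of the technical work lies; once $\Gamma$ is in place, the remaining verifications of degrees, cycle lengths, and colour propagation are routine.
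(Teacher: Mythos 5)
The paper does not prove this statement at all: it is imported as a black box from Cai and Ellis~\cite{CE91}, so there is no in-paper proof to compare against. Judged on its own, your reduction has the right overall shape (edge replacement by a colour-transmitting gadget whose terminals are far apart is indeed how such results are proved), but it contains a genuine gap: the gadget $\Gamma$, on which everything rests, is never constructed, and the mechanism you propose for its crucial property~(ii) is the wrong tool. Vizing-style alternating-path (Kempe-chain) arguments are recolouring tools --- they show that a colouring \emph{can} be modified, not that \emph{every} colouring is forced to assign $e_x$ and $e_y$ the same colour --- so ``chaining'' them along a path between the terminals does not yield the rigidity you need, and it is not clear the plan can be completed along those lines.

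The standard way to obtain~(ii) is a parity argument that is absent from your proposal. Arrange that $|V(\Gamma)|$ is \emph{even}, every internal vertex has degree $k$ with all neighbours inside $\Gamma$, and the two terminals have degree $k-1$ inside $\Gamma$. In any proper $k$-edge colouring of $G'$, for each colour $i$ the colour-$i$ edges lying inside a copy of $\Gamma$ form a matching covering every internal vertex (each sees all $k$ colours inside $\Gamma$), so the number of terminals it fails to cover is congruent to $|V(\Gamma)|\equiv 0 \pmod 2$; since a terminal is uncovered precisely when its attachment edge has colour $i$, the two attachment edges carry each colour together, which is exactly~(ii). With this observation a concrete gadget is immediate: take a $k$-regular bipartite graph of girth greater than $s$ (such graphs exist for all $k$, have an even number of vertices, and are $k$-edge colourable with every colour class a perfect matching) and delete one edge $xy$. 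Then $x$ and $y$ have degree $k-1$ and are at distance at least $s$, $\Gamma$ contains no $C_s$ whatsoever, and the colouring extends with both pendant edges receiving the colour of the deleted edge (any other common colour is handled by permuting colour names). Your verifications of $k$-regularity and $C_s$-freeness of $G'$ then go through as you describe. Without some such construction and without the parity step, the proposal remains a plan rather than a proof.
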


\begin{theorem}[\cite{CE91}]\label{t-ce2}
Let $k\geq 3$ be an odd integer. Then $k$-{\sc Edge Colouring} is \NP-complete for $k$-regular line graphs of bipartite graphs.
\end{theorem}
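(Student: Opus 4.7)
The plan is to reduce from $k$-{\sc Edge Colouring} on $k$-regular graphs, which is $\NP$-complete by Theorem~\ref{t-hard}. Given a $k$-regular instance $G$, the goal is to build in polynomial time a bipartite graph $H$ such that $L(H)$ is $k$-regular and $L(H)$ is $k$-edge colourable if and only if $G$ is.

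Any bipartite $H$ with $L(H)$ being $k$-regular must satisfy $\deg_H(u) + \deg_H(v) = k+2$ for every edge $uv$, and since $k$ is odd the two sides of $H$ must have distinct degrees. A natural first candidate is the subdivision-incidence graph of $G$: place a degree-$k$ vertex for each $v \in V(G)$ on side $A$, a degree-$2$ vertex for each $e \in E(G)$ on side $B$, and include the obvious incidences. Then $L(H)$ is $k$-regular; its vertices are the flags $(v,e)$ of $G$, where at each $v \in V(G)$ the $k$ flags at $v$ induce a clique $K_k$, and at each $e=uv \in E(G)$ the two flags $(u,e)$ and $(v,e)$ are joined by a single edge.

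The odd parity of $k$ is crucial here: $K_k$ has chromatic index exactly $k$, so any $k$-edge colouring of $L(H)$ uses all $k$ colours on each such clique and assigns to every flag a unique \emph{missing} colour. The edge in $L(H)$ between the two flags on an edge $e=uv$ of $G$ must then carry the colour missing at both of its endpoints, so $c(e) :=$ that common missing colour is a well-defined candidate edge colouring of $G$. The forward direction (from a $k$-edge colouring of $G$ to one of $L(H)$) is straightforward by choosing a $k$-edge colouring of each $K_k$ in which the flag $(v,e)$ is prescribed to miss the colour $c(e)$, which is realisable whenever the prescribed misses form a permutation of the $k$ colours.

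\textbf{The main obstacle} is the converse. A parity argument on each $K_k$ shows only that every colour is missed by an \emph{odd} number of flags at $v$, not necessarily by exactly one, so two distinct $G$-edges at $v$ could a priori end up with the same colour $c$. Overcoming this requires enriching the construction with a gadget attached at each vertex of $G$ that forces the missing-colour map on the $k$ flags at $v$ to be a bijection, in such a way that the parity-based argument becomes tight precisely because $k$ is odd (and would fail for even $k$, consistent with the theorem statement). Designing such a gadget while maintaining both $k$-regularity and the bipartite-line-graph structure, and then verifying the two-way equivalence of colourings, is the technical crux of the proof.
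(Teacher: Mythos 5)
The paper itself offers no proof of this statement: it is imported verbatim from Cai and Ellis~\cite{CE91}, so your attempt can only be measured against the known argument rather than against anything in the text. Your construction is the right one and is, in essence, the Cai--Ellis reduction: take $H$ to be the vertex--edge incidence graph of the $k$-regular instance $G$, so that $L(H)$ is the disjoint union of one clique $K_k$ per vertex of $G$ together with a perfect matching of ``cross edges'' indexed by $E(G)$; your forward direction is also fine. The genuine defect is that you declare the converse to be an unresolved obstacle requiring an extra gadget and stop there. No gadget is needed; the argument you started closes with one counting observation that you overlooked.

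Concretely: each flag $(v,e)$ has degree exactly $k$ in $L(H)$ ($k-1$ clique edges plus one cross edge), so in a proper $k$-edge colouring all $k$ colours appear on its incident edges, and hence each flag misses \emph{exactly} one colour on its clique edges, namely the colour of its cross edge. Summing over the $k$ flags at $v$, the number of (flag, missed colour) incidences in the clique at $v$ is exactly $k$. On the other hand, because $k$ is odd, every colour class inside $K_k$ is a matching covering at most $k-1$ vertices, so every one of the $k$ colours is missed by at least one flag. Thus we have $k$ positive integers summing to $k$, so each equals $1$: the missing-colour map at $v$ is a bijection, the cross edges at the $k$ flags of $v$ carry pairwise distinct colours, and $c(e):=c'\big((u,e)(v,e)\big)$ is a proper $k$-edge colouring of $G$. (You do not even need the full parity statement ``odd number of misses''; ``at least one miss'' plus the count of $k$ suffices.) So your ``main obstacle'' is not an obstacle and your reduction is already complete as stated. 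The role of odd parity is exactly that a colour class in $K_k$ cannot be a perfect matching; for even $k$ it can be, the lower bound of one miss per colour fails, and one is forced into the paired-misses situation of Lemma~\ref{Kneven} --- which is why the paper needs a separate gadget (Lemma~\ref{l-claw}) for even $k$.
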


It is also known that {\sc Edge Colouring} is polynomial-time solvable for chordless graphs~\cite{MFT13},
series-parallel graphs~\cite{Jo85}, split-indifference graphs~\cite{OMS98} and for graphs of treewidth at most $k$ for any constant~$k$~\cite{Bo90}. 

It is not difficult to see that a graph class~${\cal G}$ is hereditary if and only if it can be characterized  by a set ${\cal F}_{\cal G}$ of forbidden induced subgraphs (see, for example,~\cite{KL15}).
Malyshev determined the complexity of  $3$-{\sc Edge Colouring} for every hereditary graph class ${\cal G}$, for which ${\cal F}_{\cal G}$ consists of graphs that each have at most five vertices, except perhaps two graphs that may contain six vertices~\cite{Ma14}. 
Malyshev performed a similar complexity study for {\sc Edge Colouring} for graph classes defined by a family of forbidden (but not necessarily induced) graphs with at most seven vertices and at most six edges~\cite{Ma17}.

 We focus on the case where ${\cal F}_{\cal G}$ consists of a single graph~$H$. A graph $G$ is {\em $H$-free} if $G$ does not contain an induced subgraph isomorphic to~$H$.
We obtain the following dichotomy for $H$-free graphs.

\begin{theorem}\label{t-main}
Let $k\geq 3$ be an integer and $H$ be a graph. If $H$ is a linear forest, then $k$-{\sc Edge Colouring} is polynomial-time solvable for  $H$-free graphs. Otherwise $k$-{\sc Edge Colouring} is \NP-complete even for $k$-regular $H$-free graphs.
\end{theorem}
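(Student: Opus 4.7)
My plan for the dichotomy has two halves: a polynomial-time algorithm when $H$ is a linear forest, and NP-completeness reductions otherwise.

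For the polynomial direction the first step is a structural observation. If $H$ is a linear forest on $N$ vertices with $\ell$ components, then $H$ is an induced subgraph of the path $P_{N+\ell-1}$: list the path components of $H$ in sequence, separating consecutive components by a single omitted vertex. Since in any connected graph a shortest path realizing the diameter is induced, every connected $H$-free graph has diameter at most $N+\ell-3$; combined with maximum degree at most $k$ this bounds the order by a constant $f(|H|,k)$. The algorithm then: compute $\Delta(G)$; reject if $\Delta(G)\geq k+1$; accept if $\Delta(G)\leq k-1$ via Theorem~\ref{t-vizing}; otherwise handle each connected component $C$ separately, where components with $\Delta(C)<k$ are $k$-edge-colourable by Vizing, while components with $\Delta(C)=k$ are $H$-free of maximum degree $k$, hence of constant order by the observation, and can be brute-forced in constant time.

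For the NP-hard direction, suppose $H$ is not a linear forest, so either $H$ contains a cycle or $H$ is a forest with a vertex of degree at least $3$. In the first case let $C_s$ be any induced cycle of $H$; every $C_s$-free graph is $H$-free (an induced copy of $H$ would contain an induced $C_s$), so Theorem~\ref{t-ce1} transfers NP-completeness to $k$-regular $H$-free graphs directly. In the second case, since $H$ is acyclic, three neighbours of a degree-$\geq 3$ vertex of $H$ are pairwise non-adjacent, so $H$ contains an induced $K_{1,3}$ and every claw-free graph is $H$-free; it therefore suffices to prove NP-completeness of $k$-{\sc Edge Colouring} on $k$-regular claw-free graphs. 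For odd $k\geq 3$ this is immediate from Theorem~\ref{t-ce2}, since every line graph is claw-free.

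The main obstacle is the remaining sub-case in which $k\geq 4$ is even and $H$ is a forest of maximum degree $\geq 3$, for which Theorem~\ref{t-ce2} no longer applies. My plan is a gadget reduction from the unrestricted NP-hard problem of Theorem~\ref{t-hard}. Starting from a $k$-regular graph $G$, I would replace each vertex $v$ by a small claw-free gadget $\gamma_v$ with $k$ distinguished attachment vertices, attach the $k$ external edges incident to $v$ one per attachment, and design $\gamma_v$ so that (i) the blown-up graph is $k$-regular and every vertex's neighbourhood is covered by two cliques (precluding induced claws), and (ii) any $k$-edge colouring of the blown-up graph forces the colours on the $k$ external edges at $\gamma_v$ to be pairwise distinct, so that they read off a $k$-edge colouring of $G$, which conversely lifts. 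The delicate step is (ii): the naive choice $\gamma_v=K_k$ satisfies (i) but fails (ii), since external edges at distinct attachment vertices do not share a vertex in the blown-up graph. A working $\gamma_v$ must therefore couple its attachment vertices, for instance by an additional internal layer that forces a permutation constraint on the $k$ external colours through internal edge-colour conflicts, and verifying that such a gadget exists and behaves correctly for every even $k\geq 4$ is the main technical challenge of the proof.
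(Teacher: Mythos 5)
Your reduction skeleton for the hardness half is the same as the paper's, but it has a genuine gap exactly where the paper's main technical contribution lies: the case of even $k\geq 4$ and claw-free graphs. You correctly identify that Theorem~\ref{t-ce2} only covers odd $k$, that a gadget reduction from Theorem~\ref{t-hard} is needed, and that the naive gadget $\gamma_v=K_k$ fails because external edges at distinct attachment vertices share no endpoint; but you then leave the construction of a working gadget as ``the main technical challenge'' rather than providing one. That challenge is precisely Lemma~\ref{l-claw} of the paper. The paper's gadget $H(v)$ consists of two cliques $K_1(v),K_2(v)$, each a $K_{2\ell}$ (where $k=2\ell$), joined through a single central vertex $w$ adjacent to $\ell$ vertices of each clique; the $k$ pendant (external) edges attach to the remaining $\ell$ vertices of each clique. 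The coupling you ask for is achieved by a parity/matching argument: if some colour $x$ appeared on an edge $wv_i'$ but on none of the external edges into $K_1(v)$, then $x$ would have to be realised by a perfect matching of $K_1(v)-v_1'$, which has odd order $2\ell-1$ --- a contradiction. The converse direction uses the Cai--Ellis colouring of $K_k$ for even $k$ (Lemma~\ref{Kneven}), in which vertices pair up according to the unique colour each pair misses. Without some such explicit construction and verification, the even-$k$ subcase, and hence the theorem, is not proved.

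Your polynomial-time half, by contrast, is complete and takes a genuinely different and more elementary route than the paper. You reduce to $P_t$-free graphs (with the sharper $t=N+\ell-1$ rather than the paper's $t=\ell|V(H)|$, though either works), observe that a connected $P_t$-free graph has diameter at most $t-2$ because shortest paths are induced, and combine this with the maximum-degree bound $k$ to get a Moore-type constant bound on the order, after which brute force suffices. The paper instead bounds the order by induction on $t$ using the Camby--Schaudt theorem on minimum connected dominating sets of $P_t$-free graphs (Theorem~\ref{t-cs}) together with Observation~\ref{l-bounded}. Your diameter argument is self-contained and avoids that external structural result; both yield the same conclusion that components of maximum degree exactly $k$ have constant size.
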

We obtain Theorem~\ref{t-main} by combining Theorems~\ref{t-ce1} and~\ref{t-ce2} with two new results. 
In particular, we will prove a hardness result for $k$-regular claw-free graphs for even integers~$k$ (as Theorem~\ref{t-ce2} is only valid when $k$ is odd).

\section{Preliminaries}\label{s-pre}

The graphs $C_n$, $P_n$ and $K_n$ denote the path, cycle and complete graph on $n$ vertices, respectively.
A set $I$ is an {\it independent set} of a graph~$G$ if all vertices of $I$ are pairwise nonadjacent in~$G$.
A graph $G$ is {\it bipartite} if its vertex set can be partitioned into two independent sets~$A$ and $B$. If there exists an edge between every vertex of $A$ and every vertex of $B$, then $G$ is {\it complete bipartite}.
The {\it claw} $K_{1,3}$ is the complete bipartite graph with $|A|=1$ and $|B|=3$.

Let $G_1$ and $G_2$ be two vertex-disjoint graphs.
The {\it join} operation~$\times$  adds an edge between every vertex of $G_1$ and every vertex of $G_2$.
The {\it disjoint union}  operation~$+$ merges $G_1$ and $G_2$ into one graph without adding any new edges, that is,
$G_1+G_2=(V(G_1)\cup V(G_2), E(G_1)\cup E(G_2))$. We write $rG$ to denote the disjoint union of~$r$ copies of a graph~$G$.

A {\it forest} is a graph with no cycles. 
A {\it linear forest} is a forest of maximum degree at most~2, or equivalently, a disjoint union of one or more paths. 
A graph $G$ is a {\it cograph} if $G$ can be generated from $K_1$ by a sequence of join and disjoint union operations.
A graph is a cograph if and only if it is $P_4$-free (see, for example,~\cite{BLS99}).
The following well-known lemma follows from this equivalence and the definition of a cograph.

\begin{lemma}\label{l-p4}
Every connected $P_4$-free graph on at least two vertices has a spanning complete bipartite subgraph.
\end{lemma}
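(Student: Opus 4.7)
The plan is to exploit the recursive construction of cographs directly. Since $G$ is $P_4$-free and has at least two vertices, $G$ is a cograph distinct from $K_1$, so it arises from smaller cographs via either a join or a disjoint union. I would proceed by induction on $|V(G)|$, but actually only the base observation about the top-level operation is needed, so a single case distinction suffices.

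First I would observe that the last operation in the cograph construction of $G$ cannot be a disjoint union: if $G=G_1+G_2$ with $V(G_1),V(G_2)$ both nonempty, then there is no edge between $V(G_1)$ and $V(G_2)$, contradicting connectivity of $G$. Hence the last operation must be a join, so $G=G_1\times G_2$ for some nonempty cographs $G_1$ and $G_2$. By definition of the join, every vertex of $G_1$ is adjacent in $G$ to every vertex of $G_2$.

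Therefore the bipartite graph with parts $A:=V(G_1)$ and $B:=V(G_2)$ and edge set $\{ab : a\in A,\, b\in B\}$ is a complete bipartite subgraph of $G$ that spans $V(G)=A\cup B$, which is exactly what the lemma asserts.

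The only mildly subtle point is justifying that every connected cograph on at least two vertices indeed admits a top-level decomposition as a join (as opposed to being \emph{representable} only via a tree whose root is a disjoint union). This is immediate from the fact that a disjoint union of two nonempty graphs is disconnected, but one must be careful to state it as an argument about the existence of some join-representation rather than about a fixed cotree. I do not expect this to be a real obstacle, since it is a one-line argument once the case distinction is set up.
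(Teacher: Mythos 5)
Your argument is correct and is exactly the reasoning the paper intends: the paper states the lemma without proof, remarking only that it "follows from this equivalence and the definition of a cograph," which is precisely your observation that a connected cograph on at least two vertices must arise from a top-level join, whose two sides form the parts of a spanning complete bipartite subgraph. Your care about arguing for the existence of \emph{some} join-representation (rather than a fixed cotree) is appropriate but, as you note, immediate from connectivity.
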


Let $G=(V,E)$ be a graph. For a subset $S\subseteq V$, the graph $G[S]=(S,\{uv\in E\; |\; u,v\in S\})$ denotes the subgraph of $G$ {\it induced} by $S$. We say that $S$ is {\it connected} if $G[S]$ is connected.
Recall that a graph $G$ is {\it $H$-free} for some graph~$H$ if $G$ does not contain $H$ as an induced subgraph.
 A subset $D\subset V(G)$ is {\it dominating} if every vertex of $V(G)\setminus D$ is adjacent to least one vertex of $D$.
We will need the following result of Camby and Schaudt.
  
\begin{theorem}[\cite{CS16}]\label{t-cs}
Let $t\geq 4$ and $G$ be a connected $P_t$-free graph. Let $X$ be any minimum connected dominating set of~$G$. Then 
$G[X]$ is either $P_{t-2}$-free or isomorphic to $P_{t-2}$. 
\end{theorem}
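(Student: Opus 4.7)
I would argue by contradiction: suppose $G[X]$ contains an induced $P_{t-2}$, say on vertices $v_1v_2\cdots v_{t-2}$, but $G[X]\not\cong P_{t-2}$; I will exhibit an induced $P_t$ in $G$, contradicting $P_t$-freeness. Since $G[X]\not\cong P_{t-2}$ but contains it as induced subgraph, we must have $|X|>t-2$, so $X$ has a vertex outside $V(P)$.

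The strategy is to extend $P=v_1\cdots v_{t-2}$ by one vertex at each endpoint. The key step is to find $u\in V(G)$ with $u\sim v_1$ and $u\not\sim v_i$ for all $i\geq 2$, and symmetrically $w\in V(G)$ with $w\sim v_{t-2}$ and $w\not\sim v_j$ for all $j\leq t-3$, subject to $u\neq w$ and $u\not\sim w$; then $u\,v_1\cdots v_{t-2}\,w$ is an induced $P_t$. For the existence of $u$, minimality of $X$ forces $X\setminus\{v_1\}$ to fail as a connected dominating set, so either (a) $v_1$ has a \emph{private neighbour} $u\in V(G)\setminus X$ whose only $X$-neighbour is $v_1$ (hence $u\not\sim v_i$ for $i\geq 2$), or (b) $G[X\setminus\{v_1\}]$ is disconnected and one can take $u\in X\setminus V(P)$ adjacent to $v_1$ in a component distinct from the one containing $v_2,\ldots,v_{t-2}$ (these latter vertices lie together since $v_2\cdots v_{t-2}$ is a path), which again gives $u\not\sim v_i$ for $i\geq 2$. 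A symmetric argument at $v_{t-2}$ produces $w$.

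The inequality $u\neq w$ is automatic, for $u\not\sim v_{t-2}$ while $w\sim v_{t-2}$. The main obstacle is verifying $u\not\sim w$. A four-way case analysis, according to which of (a)/(b) produced $u$ and $w$, resolves most configurations: if both come from (b), so $u,w\in X\setminus\{v_1\}$, then $w\sim v_{t-2}$ places $w$ in the same component of $G[X\setminus\{v_1\}]$ as $v_2,\ldots,v_{t-2}$ whereas $u$ lies in a different component, forcing $u\not\sim w$; in mixed cases (one of $u,w$ outside $X$ as a private neighbour, the other inside $X$), the private-neighbour property makes non-adjacency immediate. The genuinely delicate case is when both $u$ and $w$ are private neighbours outside $X$ and happen to be adjacent, producing only an induced $C_t$ on $u,v_1,\ldots,v_{t-2},w$ rather than $P_t$. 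To finish, one must exploit additional flexibility: either some alternative choice of private neighbour of $v_1$ avoids adjacency to $w$, or whenever either endpoint is also a cut vertex of $G[X]$ one invokes case (b) to pick a vertex inside $X$, reducing to a previously settled sub-case and yielding the desired induced $P_t$.
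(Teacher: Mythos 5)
First, a remark on the comparison itself: the paper does not prove this statement at all --- it is imported verbatim from Camby and Schaudt~\cite{CS16} and used as a black box --- so there is no in-paper proof to measure your attempt against; it has to stand on its own as a proof of the cited result.

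Your setup and the easy cases are sound: minimality of $X$ does force, for each endpoint $v_1$ (resp.\ $v_{t-2}$), either a private neighbour outside $X$ whose only $X$-neighbour is that endpoint, or a component of $G[X]$ minus that endpoint disjoint from the one containing the rest of $P$; and your verification of $u\neq w$ and of $u\not\sim w$ in the (b)/(b) and mixed cases is correct. But the case you yourself flag as ``genuinely delicate'' is not resolved, and it is precisely where the content of the theorem lies. In that case $X\setminus\{v_1\}$ and $X\setminus\{v_{t-2}\}$ are both connected, so option (b) is unavailable at \emph{either} end, both $u$ and $w$ must be private neighbours, and it may happen that every private neighbour of $v_1$ is adjacent to every private neighbour of $v_{t-2}$. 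Neither of your escape routes closes this: ``some alternative choice of private neighbour avoids adjacency to $w$'' is exactly the assertion that fails here and would itself need proof, while ``invoke case (b) when an endpoint is a cut vertex of $G[X]$'' presupposes a hypothesis that need not hold. All you obtain is an induced $C_t$, which is perfectly consistent with $P_t$-freeness, so no contradiction has been reached. Closing the gap genuinely requires using the remaining vertices of $X$ (since $G[X]\not\cong P_{t-2}$ forces $|X|>t-2$, some $z\in X\setminus V(P)$ exists, and $z$, being in $X$, is nonadjacent to the private neighbours $u$ and $w$) and analysing how $z$ attaches to the cycle $u\,v_1\cdots v_{t-2}\,w$: if $z$ has exactly one neighbour on $P$ one recovers an induced $P_t$, and if it has exactly two consecutive neighbours one finds an induced $C_{t+1}$, which contains an induced $P_t$; but the remaining attachment patterns, and vertices of $X$ far from $P$, require a further case analysis that your plan does not supply.
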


Let $G=(V,E)$ be some graph. The {\it degree} of a vertex~$u\in V$ is equal to the size of its neighbourhood $N(u)=\{v\; |\; uv\in E\}$. The graph~$G$ is {\it $r$-regular} if every vertex of $G$ has degree~$r$.
The {\it line graph} of $G$ is the graph~$L(G)$, which has vertex set $E$ and an edge between two distinct vertices $e$ and $f$ if and only if $e$ and~$f$ have a common end-vertex in $G$.

\section{The Proof of Theorem~\ref{t-main}}

To prove our dichtomy,
we first consider the case where the forbidden induced subgraph $H$ is a claw. As line graphs are claw-free, we only need to deal with the case where the number of colours~$k$ is even due to Theorem~\ref{t-ce2}.
For proving this case we need another result of Cai and Ellis, which we will use as a lemma.
Let $c$ be a $k$-edge colouring of a graph $G=(V,E)$. Then a vertex~$u\in V$ {\it misses} colour~$i$ if none of the edges incident to $u$ is coloured~$i$.

\begin{lemma} [\cite{CE91}]\label{Kneven}
For even $k\geq 2$, the complete graph $K_k$ has a $k$-edge colouring with the property that~$V(K_k)$ can be partitioned into sets 
$\{u_i,u_i'\}$ $(1\leq i\leq \frac{k}{2})$, such that for $i=1,\ldots,\frac{k}{2}$, vertices $u_i$ and $u_i'$ miss the same colour, which is not missed by any of the other vertices. 
\end{lemma}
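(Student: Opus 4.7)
My plan is to give an explicit arithmetic construction rather than an abstract argument. I would label the vertices of $K_k$ by $\{0, 1, \ldots, k-1\}$ and assign to each edge $\{a, b\}$ the colour $c(a, b) = (a + b) \bmod k$, using the $k$-element colour set $\{0, 1, \ldots, k-1\}$.

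The key steps, in order, are as follows. First, I would verify properness: at any vertex $v$, the $k - 1$ incident edges receive colours $(v + b) \bmod k$ as $b$ ranges over $\{0, \ldots, k-1\} \setminus \{v\}$, and $b \mapsto (v+b) \bmod k$ is a bijection from this set to $\{0, \ldots, k-1\} \setminus \{(2v) \bmod k\}$, so the colours at $v$ are pairwise distinct. Second, I would record that each vertex $v$ therefore misses exactly one colour, namely $(2v) \bmod k$. Third, I would define the partition by pairing $u_i := i-1$ with $u_i' := i - 1 + k/2$ for $i = 1, \ldots, k/2$, so that $u_i$ and $u_i'$ share the same missed colour $(2(i-1)) \bmod k = 2i-2$. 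Fourth, I would argue that this colour is not missed by any other vertex: since $k$ is even, the map $v \mapsto (2v) \bmod k$ is exactly two-to-one on $\{0, \ldots, k-1\}$, with fibres $\{v, v + k/2\}$, which coincide with the pairs above; equivalently, the odd colours $1, 3, \ldots, k-1$ are used at every vertex.

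I do not expect any genuine obstacle. The single place where care is needed is the final step, where the evenness of $k$ is used twice: once so that $k/2$ is an integer and the pairing is well defined, and once to ensure that $v \mapsto (2v) \bmod k$ is truly two-to-one rather than one-to-one, which is precisely what produces distinct pairs of vertices sharing the same missed colour. An appealing alternative approach would be to start from a $1$-factorization $M_1, \ldots, M_{k-1}$ of $K_k$ (which exists since $k$ is even) and redistribute matchings between old and new colour classes, but this appears to require more bookkeeping than the direct modular construction.
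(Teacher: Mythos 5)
Your construction is correct and complete. The paper itself does not prove this lemma at all --- it is imported verbatim from Cai and Ellis~\cite{CE91} and used as a black box --- so your explicit argument is, if anything, more self-contained than what the paper offers. The details check out: with vertices labelled by $\mathbb{Z}_k$ and $c(\{a,b\})=(a+b)\bmod k$, the colouring is proper because $b\mapsto (v+b)\bmod k$ is injective, each vertex $v$ has degree $k-1$ and hence misses exactly the one colour $(2v)\bmod k$, and for even $k$ the doubling map is two-to-one with fibres $\{v,v+k/2\}$, which is exactly the required partition into pairs sharing a private missed colour. (As a sanity check, your colouring uses the $k/2$ even residues as near-perfect-matching classes, each missing one pair, and the $k/2$ odd residues as perfect matchings; the edge count $(k/2)\cdot\frac{k-2}{2}+(k/2)\cdot\frac{k}{2}=\binom{k}{2}$ confirms nothing is lost.) Your remark about the alternative route via a $1$-factorization of $K_k$ is essentially the flavour of argument one finds in the source \cite{CE91}; the modular colouring you chose is cleaner because the missed colour is given by a closed formula rather than by tracking which matching is withheld from which pair. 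The one cosmetic point: a $k$-edge colouring need not use all $k$ colours (e.g.\ for $k=2$ only colour $1$ appears), but the lemma does not require surjectivity, so this is not a gap.
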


We use Lemma~\ref{Kneven} to prove the following result, which solves the case where $k$ is even and $H=K_{1,3}$. 

\begin{lemma}\label{l-claw}
Let $k\geq 4$ be an even integer. Then $k$-{\sc Edge Colouring} is \NP-complete for $k$-regular claw-free graphs.
\end{lemma}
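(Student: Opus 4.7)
The plan is to reduce from the NP-complete problem of $(k-1)$-{\sc Edge Colouring} on $(k-1)$-regular line graphs of bipartite graphs, which holds by Theorem~\ref{t-ce2} since $k-1$ is odd and at least~$3$. Given such an input $H$, the goal is to construct a $k$-regular claw-free graph $G'$ that is $k$-edge colourable if and only if $H$ is $(k-1)$-edge colourable.

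Since $k-1$ is odd and $H$ is $(k-1)$-regular, $|V(H)|$ is even; I would fix an arbitrary perfect matching $\mu$ on $V(H)$. For each pair $\{v,v'\}\in\mu$ I introduce a fresh copy $K^{v,v'}$ of $K_k$ whose vertex set is partitioned into Kneven pairs $\{(u_i,u'_i)\}_{i=1}^{k/2}$ as in Lemma~\ref{Kneven}. I add external edges $vu_1$ and $v'u'_1$ attaching $v$ and $v'$ to the first Kneven pair. The remaining pairs $(u_i,u'_i)$ for $i\ge 2$ still have in-gadget degree $k-1$; to raise them to $k$, I arrange the gadgets themselves into pairs and, for each $i\ge 2$, add edges joining the $i$-th Kneven pair of one gadget to the $i$-th Kneven pair of its partner gadget. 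A direct degree count then shows that every vertex of $G'$ has degree~$k$, and the neighbourhood of any gadget vertex is contained in a $K_{k-1}$ together with a single outside vertex, which rules out any induced claw within the gadgets.

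For the colouring equivalence, the forward direction takes a $(k-1)$-edge colouring $c$ of $H$ with colours $\{1,\dots,k-1\}$, colours every new attachment edge and every new cross-gadget edge with colour $k$, and edge-colours each $K^{v,v'}$ using the colouring from Lemma~\ref{Kneven} after a permutation of colours chosen so that each Kneven pair $(u_i,u'_i)$ misses colour $k$; equivalently, colour $k$ is unused inside the gadgets, and the internal colouring reduces to a $(k-1)$-edge colouring of $K_k$, which exists because $k$ is even. The reverse direction argues, using the pairing structure of Lemma~\ref{Kneven}, that in any $k$-edge colouring of $G'$ some colour is forced (up to permutation) to appear only on the attachment and cross-gadget edges; restricting such a colouring to $E(H)$ then uses at most $k-1$ colours and yields a proper $(k-1)$-edge colouring of $H$.

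The main obstacle will be claw-freeness at vertices of $V(H)$: attaching a single pendant $u_1$ to $v$ creates an induced claw whenever $N_H(v)$ contains a non-adjacent pair, which is the generic situation, because $N_H(v)$ in the line graph $L(B)$ of a bipartite graph $B$ is the disjoint union of two cliques. I plan to handle this either by attaching $u_1$ simultaneously to $v$ and to one of the two cliques of $N_H(v)$, removing the same number of internal edges from $u_1$ inside $K^{v,v'}$ so that all degrees remain~$k$, or by replacing the single pendant edge by a local clique identification that forces $N_{G'}(v)$ to be itself the union of two cliques. Checking that this modification still lets Lemma~\ref{Kneven} drive both directions of the colouring equivalence is the most delicate step of the proof.
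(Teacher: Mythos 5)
Your reduction does not go through as described, and the difficulty you flag in your last paragraph is not a loose end but the crux of the whole problem. A vertex $v$ of a $(k-1)$-regular line graph $L(B)$ of a bipartite graph $B$ corresponds to an edge $xy$ of $B$ with $\deg_B(x)+\deg_B(y)=k+1\geq 5$, so in general $N_H(v)$ is the disjoint union of two non-empty cliques with \emph{no} edges between them. Attaching any pendant vertex $u_1$ to $v$ then creates the induced claw on $\{v,a,b,u_1\}$ for $a,b$ taken from the two cliques, so $G'$ is not claw-free at essentially every original vertex. Neither proposed repair is worked out, and both break other parts of the argument: joining $u_1$ additionally to one of the two cliques of $N_H(v)$ pushes the degrees of that clique's vertices above $k$ unless edges of $H$ itself are deleted, and removing internal edges of $K^{v,v'}$ at $u_1$ destroys the complete graph $K_k$ on which Lemma~\ref{Kneven}, and hence your entire colouring analysis, relies. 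So claw-freeness, the very property the lemma is about, is not established.

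Independently, the reverse direction of your equivalence is only asserted, and it appears to be false. Since $H$ is $(k-1)$-regular, Vizing's theorem always provides a $k$-edge colouring of $H$ in which every vertex misses exactly one colour; colouring each attachment edge $vu_1$ with the colour missed at $v$ and choosing the cross-gadget colours in matched pairs, one can (for suitable such colourings, e.g.\ when the two vertices of each matched pair of $\mu$ miss the same colour) extend to a $k$-edge colouring of $G'$ via Lemma~\ref{Kneven} even when $\chi'(H)=k$. Nothing in the construction forces one colour to be absent from all of $E(H)$, so a $k$-edge colouring of $G'$ need not yield a $(k-1)$-edge colouring of $H$. The paper sidesteps both issues by reducing from $k$-{\sc Edge Colouring} on arbitrary $k$-regular graphs (Theorem~\ref{t-hard}) and replacing \emph{every} vertex of the input graph by a gadget built from two copies of $K_k$ and a central vertex: no original vertex survives, every vertex of the constructed graph has its neighbourhood covered by two cliques, and the parity/matching argument inside the $K_k$'s forces the pendant-edge colours to transfer back to the original graph.
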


\begin{proof}
Recall that $k$-{\sc Edge Colouring} for $k$-regular graphs is \NP-complete for every integer $k\geq 4$ due to Theorem~\ref{t-hard}. Consider an instance $(G,k)$ of $k$-{\sc Edge Colouring}, where $G$ is $k$-regular for some even integer $k=2\ell\geq 4$. From~$G$ we construct a graph~$G'$ as follows. First we replace every vertex $v$ in $G$ by the gadget~$H(v)$ shown in Figure \ref{Fig:gadgetH}.  Next we connect the different gadgets in the following way.
Every gadget $H(v)$ has exactly $k$ pendant edges, which are incident with vertices $v_{1},\ldots,v_{\ell},v_{\ell+1},\ldots,v_{2\ell}$, respectively. As $G$ is $k$-regular, every vertex has $k$ neighbours in $G$. 
Hence,  we can identify each edge $uv$ of $G$ with a unique edge $u_hv_i$ in $G'$, which is a pendant edge of both $H(u)$ and $H(v)$. It is readily seen that $G'$ is $k$-regular and claw-free.

\tikzset{
  circ/.style = {circle,draw,fill,inner sep=1pt},
  invisible/.style = {circle,draw=none,inner sep=0pt,font=\tiny}
}

\begin{figure}
\centering
\begin{tikzpicture}[scale=2,node distance=.7cm]
\node[circ,label=below:{\footnotesize $v$}] (v) at (-2.5,0) {};
\node[circ,above right of =v] (u1) {};
\node[circ,above of=v] (u2) {};
\node[circ,above left of=v] (u3) {};
\node[circ,below left of=v] (u4) {};
\node[circ,below right of=v] (u5) {};

\draw[thick,dotted,domain=168:192] plot ({0.3*cos(\x)-2.5},{0.3*sin(\x)});
\draw[thick,dotted,domain=348:372] plot ({0.3*cos(\x)-2.5},{0.3*sin(\x)});

\draw[thick] (v) -- (u1)
(v) -- (u2)
(v) -- (u3)
(v) -- (u4)
(v) -- (u5);

\draw[-Implies,line width=.6pt,double distance=2pt] (-1.8,0) -- (-1.4,0);

\node[circ,label=below:{\footnotesize $v'_1$}] (v1) at (.09,.99) {};
\node[circ,label=below:{\footnotesize $v'_2$}] (v2) at (.34,.94) {};
\node[circ,label=above:{\footnotesize $v'_{\ell}$}] (vl) at (.09,-.99) {};
\node[circ,label=above:{\footnotesize $v_{1}$}] (vl1) at (-.17,-.98) {};
\node[circ,label=below:{\footnotesize $v_{\ell}$}] (v2l) at (-.17,.98) {};
\node[circ,label={[label distance=.05cm]178: \footnotesize $v_{\ell -1}$}] (v2l1) at (-.42,.9) {};
\node[circ,label={[label distance=.05cm]182: \footnotesize $v_{2}$}] (vl2) at (-.42,-.9) {};
\node[draw=none] (K1) at (0,0) {$K_1(v)$};

\node[circ] (p2l) at (-.21,1.18) {};
\node[circ] (pl1) at (-.21,-1.18) {};
\node[circ] (p2l1) at (-.5,1.09) {};
\node[circ] (pl2) at (-.5,-1.09) {};

\draw[-,thick] (v2l) -- (p2l)
(vl1) -- (pl1)
(v2l1) -- (p2l1)
(vl2) -- (pl2);

\draw[thick,domain=60:125] plot ({cos(\x)}, {sin(\x)});
\draw[thick,domain=235:290] plot ({cos(\x)}, {sin(\x)});
\draw[thick,dotted,domain=100:260] plot ({cos(\x)}, {sin(\x)});
\draw[thick,dotted,domain=280:430] plot ({cos(\x)}, {sin(\x)});

\node[circ,label=below:{\footnotesize $w$}] (w) at (2,0) {};

\node[circ,label=below:{\footnotesize $v_{\ell +1}$}] (v'2l) at (4.17,.98) {};
\node[circ,label={[label distance=.05cm]2: \footnotesize $v_{\ell +2}$}] (v'2l1) at (4.42,.9) {};
\node[circ,label=above:{\footnotesize $v_{2\ell}$}] (v'l1) at (4.17,-.98) {};
\node[circ,label=below:{\footnotesize $v'_{\ell+1}$}] (v'l) at (3.91,-.99) {};
\node[circ,label=below:{\footnotesize $v'_{2\ell}$}] (v'1) at (3.91,.99) {};
\node[circ,label=below:{\footnotesize $v'_{2\ell -1}$}] (v'2) at (3.66,.94) {};
\node[circ,label={[label distance=.05cm]-3: \footnotesize $v_{2\ell -1}$}] (v'l2) at (4.42,-.9) {};
\node[draw=none] (K2) at (4,0) {$K_2(v)$};

\node[circ] (p'2l) at (4.21,1.18) {};
\node[circ] (p'l1) at (4.21,-1.19) {};
\node[circ] (p'2l1) at (4.5,1.09) {};
\node[circ] (p'l2) at (4.5,-1.09) {};

\draw[-,thick] (v'2l) -- (p'2l)
(v'l1) -- (p'l1)
(v'2l1) -- (p'2l1)
(v'l2) -- (p'l2);

\draw[thick,domain=55:120] plot ({4+cos(\x)}, {sin(\x)});
\draw[thick,domain=250:305] plot ({4+cos(\x)}, {sin(\x)});
\draw[thick,dotted,domain=100:260] plot ({4+cos(\x)}, {sin(\x)});
\draw[thick,dotted,domain=280:430] plot ({4+cos(\x)}, {sin(\x)});

\draw[thick] (w) ..controls (.75,.9).. (v2);
\draw[thick] (w) ..controls (.6,1.1).. (v1);
\draw[thick] (w) ..controls (.6,-1.1).. (vl);
\draw[thick] (w) ..controls (3.25,.9).. (v'2);
\draw[thick] (w) ..controls (3.4,1.1).. (v'1);
\draw[thick] (w) ..controls (3.4,-1.1).. (v'l);
\end{tikzpicture}
\caption{The gadget $H(v)$ where $K_i(v)$ is a complete graph of size $2\ell$ for $i=1,2$. 
Note that edges inside $K_1(v)$ and $K_2(v)$ are not drawn.}
\label{Fig:gadgetH}
\end{figure}
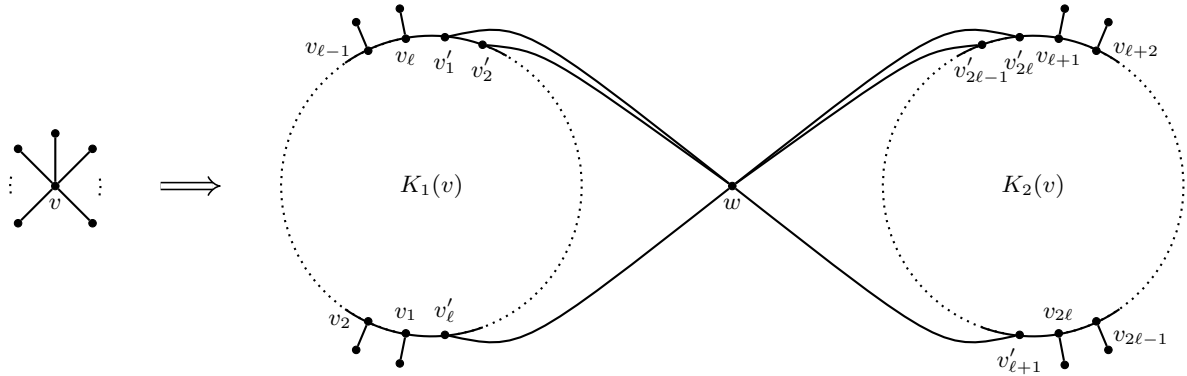

First suppose that $G$ is $k$-edge colourable. Let $c$ be a $k$-edge colouring of~$G$. Consider a vertex $v\in V(G)$. For every neighbour~$u$ of $v$ in $G$, we colour the pendant edge in $H(v)$ corresponding to the edge $uv$ with colour $c(uv)$. As $c$ assigned different colours to the edges incident to $v$, the $2\ell$ pendant edges of $H(v)$ will receive pairwise distinct colours, which 
we denote by $x_1,\ldots,x_\ell, y_1,\ldots, y_\ell$.  By Lemma~\ref{Kneven}, we can colour the edges of $K_1(v)$ in such a way that  for $i=1,\ldots,\ell$, $v_i$ and $v'_i$ miss colour $x_i$. For $i=1,\ldots,\ell$, we can therefore assign colour~$x_i$ to edge~$v'_iw$. Similarly, we may assume that  for $i=1,\ldots,\ell$,  $v_{\ell +i}$ and $v_{\ell+i}'$ miss colour~$y_i$. For $i=1,\ldots,\ell$, we can therefore assign colour~$y_i$ to edge~$v_{\ell +i}'w$. Recall that the colours $x_1,\ldots,x_\ell$, $y_1,\ldots,y_\ell$ are all different. Hence, doing this procedure for each vertex of~$G$ yields a $k$-edge colouring~$c'$ of~$G'$.

Now suppose that $G'$ is $k$-edge colourable. Let $c'$ be a $k$-edge colouring of~$G'$. Consider some $v\in V(G)$. Denote the pendant edges of $H(v)$ by $e_i$ for 
$i=1,\ldots,2\ell$, where $e_i$ is incident to $v_i$ (and to some vertex $u_h$ in a gadget $H(u)$ for each neighbour~$u$ of $v$ in $G$). Suppose that $c'$ gave colour~$x$ to an edge $wv_i'$ for some $1\leq i\leq \ell$, say to $wv_1'$, but not to any edge $e_i$ for $i=1,\ldots, \ell$. Note that $wv_2',\ldots,wv'_\ell$ cannot be coloured~$x$. As every vertex of $G'$ has degree $k=2\ell$, every $v_i$ with $1\leq i\leq \ell$ and every $v_j'$ with $2 \leq j \leq \ell$ is incident to some edge coloured $x$. As~$x$ is neither the colour of $e_{1},\ldots,e_{\ell}$ nor the colour of $wv_2',\ldots,wv'_\ell$, the complete graph $K_1(v)-v'_1$ contains a perfect matching all of whose edges have colour~$x$. However, $K_1(v)-v'_1$ has odd size~$2\ell-1$.
Hence, this is not possible. We conclude that each of the (pairwise distinct) colours of $wv'_1,\ldots,wv'_\ell$, which we denote by $x_1,\ldots,x_\ell$, is the colour of an edge $e_i$ for some $1\leq i\leq \ell$.  

Let $y_1,\ldots,y_\ell$ be the (pairwise distinct) colours of $wv_{\ell+1}',\ldots,wv_{2\ell}'$, respectively. By the same arguments as above, we find that each of those colours is also the colour of a pendant edge of $H(v)$ that is incident to a vertex $v_{\ell+i}$ for some $1 \leq i\leq \ell$. Note that $x_1,\ldots,x_\ell$, $y_1,\ldots,y_\ell$ are $2\ell$ pairwise distinct colours, as they are colours of edges incident to the same vertex, namely vertex~$w$. Hence, we can define a $k$-colouring~$c$ of $G$ by setting
$c(uv)=c'(u_hv_i)$ for every edge $uv\in E(G)$ with corresponding edge $u_hv_i\in E(G')$.\qed
\end{proof}

We note that the graph $G'$ in the proof of Lemma~\ref{l-claw} is not a line graph, as the gadget $H(v)$ is not a line graph: the vertices 
$v'_1,v'_2,v_1,w$
form a diamond and by adding the pendant edge incident to 
$v_1$
and the edge 
$wv'_{\ell+1}$
we obtain an induced subgraph of $H(v)$ that is not a line graph.
 
To handle the case where the forbidden induced subgraph~$H$ is a path, we make the following observation.

\begin{observation}\label{l-bounded}
If a graph $G$ of maximum degree~$k$ has a dominating set of size at most~$p$, then $G$ has at most $p(k+1)$ vertices.
\end{observation}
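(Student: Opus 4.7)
The plan is a direct counting argument based on the definition of a dominating set together with the maximum degree bound. Let $D$ be a dominating set of $G$ with $|D|\leq p$. I would partition $V(G)$ into $D$ and $V(G)\setminus D$, and bound the size of each part separately.

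The first bound is immediate: $|D|\leq p$ by hypothesis. For the second part, I would invoke the definition of domination: every vertex of $V(G)\setminus D$ has at least one neighbour inside $D$. Therefore $V(G)\setminus D$ is contained in $\bigcup_{u\in D} N(u)$, and since $G$ has maximum degree at most $k$, each set $N(u)$ has size at most $k$. Hence $|V(G)\setminus D|\leq \sum_{u\in D}|N(u)|\leq |D|\cdot k\leq pk$.

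Adding the two bounds gives $|V(G)|\leq p+pk=p(k+1)$, as claimed. There is no real obstacle here; the statement is essentially a restatement of the definitions of dominating set and maximum degree, and it will be used later as a convenient size bound for $P_t$-free graphs admitting small connected dominating sets (as supplied by Theorem~\ref{t-cs}).
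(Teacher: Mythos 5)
Your proof is correct and is exactly the standard counting argument; the paper states this as an observation without proof, and your argument (bounding $|D|$ by $p$ and $|V(G)\setminus D|$ by $\sum_{u\in D}|N(u)|\leq pk$) is precisely the justification the authors intend.
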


We use Observation~\ref{l-bounded} in the proof of the following lemma.

\begin{lemma}\label{l-path}
Let $k\geq 3$ and $t\geq 1$. Then $k$-{\sc Edge Colouring} is constant-time solvable for $P_t$-free graphs.
\end{lemma}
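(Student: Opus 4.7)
The plan is to brute-force the problem on graphs of bounded size. We may assume $G$ is connected, since $k$-edge-colourability decomposes over connected components; and we may assume $\Delta(G)=k$, since by Theorem~\ref{t-vizing} the cases $\Delta(G)\le k-1$ and $\Delta(G)\ge k+1$ are decided trivially as YES and NO, respectively.

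The main step is to bound $|V(G)|$ by a constant depending only on $t$ and $k$. I would prove, by induction on $t$, that every connected $P_t$-free graph $H$ with $\Delta(H)\le k$ has a connected dominating set of size at most some $g(t,k)$. In the base cases $t\le 3$, such a graph is either empty, a single vertex, or a clique, and hence has a connected dominating set of size at most~$1$. For the inductive step $t\ge 4$, Theorem~\ref{t-cs} furnishes a minimum connected dominating set $X$ of $H$ such that $H[X]$ is either isomorphic to $P_{t-2}$ (so $|X|=t-2$) or connected and $P_{t-2}$-free. In the latter case, since $H[X]$ has maximum degree at most $k$, the induction hypothesis yields a connected dominating set of $H[X]$ of size at most $g(t-2,k)$, and Observation~\ref{l-bounded} applied inside $H[X]$ then bounds $|X|$ by $g(t-2,k)(k+1)$. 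Setting $g(t,k)$ to the maximum of the two bounds closes the induction.

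Applying Observation~\ref{l-bounded} once more to $G$ with the dominating set $X$ of size at most $g(t,k)$ yields $|V(G)|\le g(t,k)(k+1)$, so $G$ has constant size. Consequently, $k$-edge-colourability can be decided by brute force (for instance by trying all colour assignments) in constant time. I do not foresee any real obstacle beyond bookkeeping; the only subtle point is that in the inductive step the bound on the induced dominating set $H[X]$ must be lifted to a bound on $X$ via Observation~\ref{l-bounded}, and then lifted once more to bound $V(G)$, i.e.\ the Camby--Schaudt theorem must be chained with Observation~\ref{l-bounded} at each stage of the induction.
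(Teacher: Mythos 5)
Your proposal is correct and follows essentially the same route as the paper: reduce to connected graphs with $\Delta=k$ via Theorem~\ref{t-vizing}, then bound $|V(G)|$ by a constant through an induction on $t$ that combines the Camby--Schaudt theorem (Theorem~\ref{t-cs}) with Observation~\ref{l-bounded}. The only (harmless) differences are cosmetic: you phrase the inductive invariant as a bound on the size of a connected dominating set rather than on the vertex count, and you use the triviality of connected $P_{\le 3}$-free graphs as the base case where the paper instead starts at $t=4$ via Lemma~\ref{l-p4}.
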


\begin{proof}
Let $G$ be a $P_t$-free graph with maximum degree~$\Delta$. 
If $\Delta\leq k-1$, then $G$ is $k$-edge colourable by Theorem~\ref{t-vizing}. If $\Delta\geq k+1$, then $G$ is not $k$-edge colourable.
Hence, we may assume that $\Delta=k$.
We may assume without loss of generality that $G$ is connected. We claim that $G$ has at most
$f(k,t)$ vertices for some function $f$ that only depends on $k$ and $t$. As we assume that $k$ and $t$ are constants, this means that
we can check in constant time if $G$ is $k$-edge colourable.

We prove the above claim by induction on $t$.
First suppose $t=4$ (and observe that if the claim holds for $t=4$, it also holds for $t\leq 3$). 
As $G$ is connected, $G$ has a dominating set of size~$2$ due to Lemma~\ref{l-p4}. 
Hence, by Observation~\ref{l-bounded}, $G$ has at most $f(k,2)=2(k+1)$ vertices. Now suppose $t\geq 5$. 
Let $X$ be an arbitrary minimum connected dominating set of~$G$. By Theorem~\ref{t-cs}, 
$G[X]$ is either $P_{t-2}$-free or isomorphic to $P_{t-2}$. 
In the first case we use the induction hypothesis to conclude that $G[X]$ has at most $f(k,t-2)$ vertices. Hence, $G$ has at most
$f(k,t-2)(k+1)$ vertices by Observation~\ref{l-bounded}.
In the second case, we find that $G$ has at most $(t-2)(k+1)$ vertices. We set $f(k,t)=\max\{f(k,t-2)(k+1),(t-2)(k+1)\}$.
\qed
\end{proof}

We are now ready to prove Theorem~\ref{t-main}, which we restate below.

\medskip
\noindent
{\bf Theorem~\ref{t-main}. (restated)}
{\it Let $k\geq 3$ be an integer and $H$ be a graph. If $H$ is a linear forest, then $k$-{\sc Edge Colouring} is polynomial-time solvable for  $H$-free graphs. Otherwise $k$-{\sc Edge Colouring} is \NP-complete even for $k$-regular $H$-free graphs.}

\begin{proof}
First suppose that $H$ contains a cycle $C_s$ for some $s\geq 3$. Then the class of $H$-free graphs is a superclass of the class of $C_s$-free graphs. This means that we can apply Theorem~\ref{t-ce1}.
From now on assume that $H$ contains no cycle, so $H$ is a forest.
Suppose that $H$ contains a vertex of degree at least~$3$. Then the class of $H$-free graphs is a superclass of the class of $K_{1,3}$-free graphs, which in turn forms a superclass of the class of line graphs. Hence, if $k$ is odd, then we apply Theorem~\ref{t-ce2}, and if $k$ is even, then we apply Lemma~\ref{l-claw}.
From now on assume that $H$ contains no cycle and no vertex of degree at least~$3$. Then $H$ is a linear forest, say with $\ell$ connected components. Let $t=\ell|V(H)|$. Then the class of $H$-free graphs is contained in the class of $P_t$-free graphs. Hence we may apply Lemma~\ref{l-path}. This completes the proof of Theorem~\ref{t-main}. \qed
\end{proof}

\section{Conclusions}\label{s-con}

We gave a complete complexity classification of $k$-{\sc Edge Colouring} for $H$-free graphs, showing a dichotomy between constant-time solvable cases and \NP-complete cases. We saw that this depends on~$H$ being a linear forest or not. It would be interesting to prove a dichotomy result for {\sc Edge Colouring} restricted to
$H$-free graphs. Note that due to Theorem~\ref{t-main} we only need to consider the case where $H$ is a linear forest. However, even determining the complexity for small linear forests~$H$, such as the cases where $H=2P_2$ and $H=P_4$, turns out to be a difficult problem. In fact, the computational complexity of {\sc Edge Colouring} for split graphs, or equivalently, $(2P_2,C_4,C_5)$-free graphs~\cite{FH77} and for $P_4$-free graphs has yet to be settled, despite the efforts towards solving the problem for these graph classes~\cite{CFK95,AMM15,LGZA15}.

On a side note, a graph is $k$-edge colourable if and only if its line graph is $k$-vertex colourable. In contrast to the situation for {\sc Edge Colouring}, the computational complexity of {\sc Vertex Colouring} has been fully classified for $H$-free graphs~\cite{KKTW01}. However, the complexity status of $k$-{\sc Vertex Colouring} is still open for $P_t$-free graphs if $k=3$ and $t\geq 8$ after Bonomo et al.~\cite{BCMSZ18} proved polynomial-time solvability for $k=3$ and $t=7$; see the survey~\cite{GJPS17} for further background information.

\section*{Acknowledgements}
The present work was done when the second author was visiting the University of Fribourg funded by a scholarship of the University of Fribourg.

\end{document}